\newcommand{\hsupp}{\supp_\H}
\DeclareMathOperator{\relint}{ri}
\begin{document}
\title{Robust Polyhedral Regularization}

\author{\IEEEauthorblockN{Samuel Vaiter, Gabriel Peyr\'e}
\IEEEauthorblockA{CEREMADE, CNRS-Universit\'{e} Paris-Dauphine,\\
Place du Mar\'{e}chal De Lattre De Tassigny,\\
75775 Paris Cedex 16, France.\\
Email: \url{{vaiter,peyre}@ceremade.dauphine.fr}}
\and
\IEEEauthorblockN{Jalal Fadili}
\IEEEauthorblockA{GREYC, CNRS-ENSICAEN-Universit\'{e} de Caen,\\
6, Bd du Mar\'{e}chal Juin,\\
14050 Caen Cedex, France.\\
Email: \url{jalal.fadili@greyc.ensicaen.fr}}}

\maketitle

\begin{abstract}
In this paper, we establish robustness to noise perturbations of polyhedral regularization of linear inverse problems.
We provide a sufficient condition that ensures that the polyhedral face associated to the true vector is equal to that of the recovered one. This criterion also implies that the $\ldeux$ recovery error is proportional to the noise level for a range of parameter.
Our criterion is expressed in terms of the hyperplanes supporting the faces of the unit polyhedral ball of the regularization. This generalizes to an arbitrary polyhedral regularization results that are known to hold for sparse synthesis and analysis $\lun$ regularization which are encompassed in this framework. As a byproduct, we obtain recovery guarantees for $\linf$ and $\lun-\linf$ regularization.
\end{abstract}

\section{Introduction}
\label{sec:poly}

\subsection{Polyhedral Regularization}
We consider the following linear inverse problem 
\eql{\label{eq:obs}
y = \Phi x_0 + w,
} 
where $y \in \RR^Q$ are the observations, $x_0 \in \RR^N$ is the unknown true vector to recover, $w$ the bounded noise, and $\Phi$ a linear operator which maps the signal domain $\RR^N$ into the observation domain $\RR^Q$. The goal is to recover $x_0$ either exactly or to a good approximation.

We call a polyhedron a subset $\Pp$ of $\RR^N$ such that $\Pp = \enscond{x \in \RR^N}{Ax \leq b}$ for some $A \in \RR^{\NH \times N}$ and $b \in \RR^{\NH}$, where the inequality $\leq$ should be understood component-wise.
This is a classical description of convex polyhedral sets in terms of the hyperplanes supporting their $(N-1)$-dimensional faces.

In the following, we consider polyhedral convex functions of the form
\begin{equation*}
  J_\H(x) = \max_{1 \leq i \leq \NH} \dotp{x}{\h_i} ,
\end{equation*}
where $\H = (\h_i)_{i=1}^\NH \in \RR^{N \times \NH}$.
Thus, $\Pp_\H = \enscond{x \in \RR^N}{J_\H(x) \leq 1}$ is a polyhedron.
We assume that $\Pp_\H$ is a bounded polyhedron which contains 0 in its interior.
This amounts to saying that $J_H$ is a gauge, or equivalently that it is continuous, non-negative, sublinear (i.e. convex and positively homogeneous), coercive, and $J_H(x) > 0$ for $x\neq0$.
Note that it is in general not a norm because it needs not be symmetric.

In order to solve the linear inverse problem~\eqref{eq:obs}, we devise the following regularized problem
\begin{equation}\label{eq:lasso-h}\tag{$P_\lambda(y)$}
  x^\star \in \uargmin{x \in \RR^N} \frac{1}{2} \norm{y-\Phi x}^2 + \lambda J_\H(x) ,
\end{equation}
where $\lambda > 0$ is the regularization parameter.
Coercivity and convexity of $J_H$ implies the set of minimizers is non-empty, convex and compact.

In the noiseless case, $w=0$, one usually considers the equality-constrained optimization problem
\begin{equation}\label{eq:bp-h}\tag{$P_0(y)$}
  x^\star \in \uargmin{\Phi x = y}  J_\H(x).
\end{equation}

\subsection{Relation to Sparsity and Anti-sparsity}
Examples of polyhedral regularization include the $\lun$-norm, analysis $\lun$-norm and $\linf$-norm.
The $\lun$ norm reads 
\begin{equation*}
  J_{\H_1}(x) = \normu{x} = \sum_{i = 1}^N \abs{x_i} .
\end{equation*}
It corresponds to choosing $\H_1 \in \RR^{N \times 2^N}$ where the columns of $\H_1$ enumerate all possible sign patterns of length $N$, i.e. $\ens{-1,1}^N$.
The corresponding regularized problem~\eqref{eq:lasso-h} is the popular Lasso~\cite{tibshirani1996regression} or Basis Pursuit DeNoising~\cite{chen1998atomic}.
It is used for recovering sparse vectors. 
Analysis-type sparsity-inducing penalties are obtained through the (semi-)norm $J_\H(x) = \norm{L x}_1$, where $L \in \RR^{P \times N}$ is an analysis operator.
This corresponds to using $\H=L^* \H_1$ where ${}^{*}$ stands for the adjoint.
A popular example is the anisotropic total variation where $L$ is a first-order finite difference operator.

The $\linf$ norm 
\begin{equation*}
  J_{\H_\infty}(x) = \normi{x} = \max_{1 \leq i \leq N} \abs{x_i}
\end{equation*}
corresponds to choosing $\H_\infty = [\Id_N, -\Id_N] \in \RR^{N \times 2N}$.
This regularization, coined anti-sparse regularization, is used for instance for approximate nearest neighbor search~\cite{jegou2012anti}.

Another possible instance of polyhedral regularization is the group $\lun-\linf$ regularization.
Let $\Bb$ be a partition of $\ens{1,\dots,N}$.
The $\lun-\linf$ norm associated to this group structure is 
\begin{equation*}
  J_{H_\Bb^\infty}(x) = \sum_{b \in \Bb} \normi{x_b} .
\end{equation*}
This amounts to choosing the block-diagonal matrix $H_\Bb^\infty \in \RR^{N \times \prod_{b \in \Bb} 2 \abs{b}}$ such that each column is chosen by taking for each block a position with sign $\pm1$, others are 0.
If for all $b \in \Bb, \abs{b} = 1$, then we recover the $\lun$-norm, whereas if the block structure is composed by one element, we get the $\linf$-norm.

\subsection{Prior Work}
In the special case of $\lun$ and analysis $\lun$ penalties, our criterion is equivalent to those defined in~\cite{fuchs2004on-sp} and~\cite{vaiter2012robust}.
To our knowledge, there is no generic guarantee for robustness to noise with $\linf$ regularization, but~\cite{bach2010structured} studies robustness of a sub-class of polyhedral norms obtained by convex relaxation of combinatorial penalties. Its notion of support is however completely different from ours.
The work~\cite{petry2012shrinkage} studies numerically some polyhedral regularizations.
In~\cite{moeller2011multi}, the authors provide an homotopy-like algorithm for polyhedral regularization through a continuous problem coined adaptive inverse scale space method.
The work~\cite{donoho2010counting} analyzes some particular polyhedral regularizations in a noiseless compressed sensing setting when the matrix $\Phi$ is drawn from an appropriate random ensemble. Again in a compressed sensing scenario, the work of~\cite{chandrasekaran2012convex} studies a subset of polyhedral regularizations to get sharp estimates of the number of measurements for exact and $\ell_2$-stable recovery.

\section{Contributions}
\label{sec:contrib}

\begin{defn}\label{def:supp}
  We define the \emph{$\H$-support} $\hsupp(x)$ of a vector $x \in \RR^N$ to be the set
  \begin{equation*}
    \hsupp(x) = \enscond{i \in \ens{1,\dots,N_H}}{\dotp{x}{h_i} = J_\H(x)}.
  \end{equation*}
\end{defn}

This definition suggests that to recover signals with \emph{$\H$-support} $\hsupp(x)$, it would be reasonable to impose that $\Phi$ is invertible on the corresponding subspace $\Ker H_{\hsupp(x)}^*$. This is formalised in the following condition.
\begin{defn}
  A $\H$-support $I$ satisfies the \emph{restricted injectivity condition} if
  \begin{equation}\label{eq:ci}\tag{$\Cc_I$}
    \Ker \Phi \cap \Ker H_I^* = \ens{0} ,
  \end{equation}
  where $H_I$ is the matrix whose columns are those of $H$ indexed by $I$.
\end{defn}
When it holds, we define the orthogonal projection $\Gamma_I$ on $\Phi\Ker H_I^*$:
\begin{equation*}
  M_I = (U^* \Phi^* \Phi U)^{-1}
  \qandq
  \begin{cases}
    \Gamma_I     &= \Phi U M_I U^* \Phi^* \\
    \Gamma_I^\bot &= \Id - \Gamma_I.
  \end{cases}
\end{equation*}
where $U$ is (any) basis of $\Ker H_I^*$.
The symmetric bilinear form on $\RR^N$ induced by $\Gamma_I^\bot$ reads
\begin{equation*}
  \dotp{u}{v}_{\Gamma_I^\bot} = \dotp{u}{\Gamma_I^\bot v},
\end{equation*}
and we denote its associated quadratic form $\norm{\cdot}_{\Gamma_I^\bot}^2$.

\begin{defn}
  Let $I$ be a $\H$-support such that~\eqref{eq:ci} holds.
  The \emph{Identifiability Criterion} of $I$ is
  \begin{equation*}
    \IC_H(I) = 
    \max_{z_I \in \Ker H_I}
    \min_{i \in I} 
    (\PhiIT^* \Gamma_I^\bot \PhiIT \UI + z_I)_i
  \end{equation*}
  where $\UI \in \RR^{\abs{I}}$ is the vector with coefficients 1, and $\PhiIT = \Phi H_I^{+,*} \in \RR^{Q \times \abs{I}}$ where ${}^+$ stands for the Moore--Penrose pseudo-inverse.
\end{defn}
$\IC_H(I)$ can be computed by solving the linear program
\begin{equation*}
  \IC_H(I) =
  \max_{(r,z_I) \in \RR \times \RR^{\abs{I}}} r
  \text{ subj. to }
  \begin{cases}
    \forall i \in I, r \leq (\PhiIT^* \Gamma_I^\bot \PhiIT \UI + z_I)_i \\
    H_I z_I = 0 .
  \end{cases}
\end{equation*}

\subsection{Noise Robustness}
\label{sec:contrib-noise}

Our main contribution is the following result.
\begin{thm}\label{thm:small-noise}
  Let $x_0 \in \RR^N \setminus \ens{0}$ and $I$ its $H$-support such that~\eqref{eq:ci} holds.
  Let $y = \Phi x_0 + w$.
  Suppose that $\PhiIT \UI \neq 0$ and $\IC_H(I) > 0$.
  Then there exists two constants $c_I, \tilde c_I$ satisfying,
  \begin{equation*}
    \frac{\norm{w}_2}{T} < \frac{\tilde c_I}{c_I}
    \qwhereq
    T = \umin{j \in I^c} J_H(x_0) - \dotp{x_0}{h_j} > 0,
  \end{equation*}
  such that if $\lambda$ is chosen according to
  \begin{equation*}
    c_I \norm{w}_2 < \lambda < T \tilde c_I,
  \end{equation*}
  the vector $\xsol \in \RR^N$ defined by
  \begin{equation*}
    \xsol = \mu H_I^{+,*} \UI + U M_I U^* \Phi^*(y - \mu \PhiIT \UI)
  \end{equation*}
  where $U$ is any basis of $\Ker H_I^*$ and 
  \begin{equation}\label{eq:mu-solution-small-noise}
    0 < \mu = J_H(x_0) + \frac{\dotp{\PhiIT \UI}{w}_{\Gamma_I^\bot} - \lambda}{\norm{\PhiIT \UI}_{\Gamma_I^\bot}^2}
  \end{equation}
  is the unique solution of~\eqref{eq:lasso-h}, and $\xsol$ lives on the same face as $x_0$, i.e. $\hsupp
(\xsol) = \hsupp(x_0)$.
\end{thm}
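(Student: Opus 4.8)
The plan is to follow the standard dual-certificate route for $\ell^1$-type problems, adapted to the polyhedral gauge $J_\H$. Since $J_\H$ is a maximum of the linear forms $\dotp{\cdot}{h_i}$, its subdifferential at any $x$ with $\hsupp(x)=I$ is the convex hull of the active generators,
\begin{equation*}
  \partial J_\H(x) = \enscond{H_I \alpha}{\alpha \in \RR^{\abs{I}},\ \alpha \geq 0,\ \dotp{\UI}{\alpha} = 1} .
\end{equation*}
Consequently $\xsol$ is a minimizer of~\eqref{eq:lasso-h} as soon as its residual satisfies the first-order condition $\Phi^*(y - \Phi\xsol) = \lambda H_I \alpha$ for some $\alpha$ in the probability simplex. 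The whole argument thus reduces to producing such an $\alpha$, to checking that $\hsupp(\xsol)=I$, and finally to uniqueness.

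First I would check, by substitution, that the proposed $\xsol$ realises the support ansatz. As its second summand lies in $\Ker H_I^*$, which is spanned by the columns of $U$, one has $H_I^*\xsol = \mu\, H_I^* H_I^{+,*}\UI$; since $x_0$ has $\H$-support $I$, the vector $\UI$ is orthogonal to $\Ker H_I$, which gives $H_I^* H_I^{+,*}\UI = \UI$ and hence $\dotp{\xsol}{h_i} = \mu$ for every $i \in I$, so $I \subseteq \hsupp(\xsol)$ with common value $J_\H(\xsol)=\mu$ once $\mu>0$ is verified. Using $\Gamma_I = \Phi U M_I U^*\Phi^*$ the residual then collapses to $y - \Phi\xsol = \Gamma_I^\bot(y - \mu\PhiIT\UI)$, so the first-order condition becomes a linear system whose general solution is
\begin{equation*}
  \alpha = \frac1\lambda\,\PhiIT^*\Gamma_I^\bot\big(y - \mu\PhiIT\UI\big) + z_I, \qquad z_I \in \Ker H_I .
\end{equation*}
Splitting $y = \Phi x_0 + w$, using $\Gamma_I^\bot\Phi x_0 = J_\H(x_0)\,\Gamma_I^\bot\PhiIT\UI$ and substituting the value of $\mu$ from~\eqref{eq:mu-solution-small-noise}, the $w$-free part of $\alpha$ becomes $\frac1s(\PhiIT^*\Gamma_I^\bot\PhiIT\UI + z_I)$ with $s=\norm{\PhiIT\UI}_{\Gamma_I^\bot}^2>0$, while the normalisation $\dotp{\UI}{\alpha}=1$ holds automatically because $\UI\perp\Ker H_I$ and $\dotp{\UI}{\PhiIT^*\Gamma_I^\bot\PhiIT\UI}=s$.

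The two remaining requirements — that $\hsupp(\xsol)$ neither shrinks nor grows — are where the hypotheses enter. For $\alpha\geq0$ (support does not shrink), I would select $z_I$ attaining the maximum in $\IC_H(I)$, so that every coordinate of the $w$-free part is at least $\IC_H(I)/s>0$; the remaining noise term is $O(\norm{w}/\lambda)$, hence $\alpha\geq0$ is secured by a lower bound $\lambda > c_I\norm{w}$, which is exactly where the assumption $\IC_H(I)>0$ is indispensable. For the strict inequalities $\dotp{\xsol}{h_j}<\mu$ with $j\in I^c$ (support does not grow), I would note that both $\mu$ and $\xsol$ are $O(\norm{w}+\lambda)$ perturbations of $J_\H(x_0)$ and $x_0$, so the margin $T>0$ separating the inactive faces of $x_0$ keeps them inactive provided $\lambda < \tilde c_I T$; this also yields $\mu>0$. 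The parameter window $c_I\norm{w}<\lambda<\tilde c_I T$ is nonempty precisely when $\norm{w}/T<\tilde c_I/c_I$, which is the stated condition.

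It then remains to upgrade ``$\xsol$ is a minimizer with $\hsupp(\xsol)=I$'' to uniqueness. Here the certificate is nondegenerate (all coordinates of $\alpha$ are strictly positive), so any other minimizer must also have $\H$-support contained in $I$; restricted to the affine model set carried by $\Ker H_I^*$, the data term is strictly convex thanks to~\eqref{eq:ci}, which forces the minimizer to be unique and equal to $\xsol$. I expect the main difficulty to be the simultaneous bookkeeping of the two bounds: extracting from $\IC_H(I)>0$ a quantitative margin that survives the noise perturbation, while controlling $\norm{\xsol-x_0}$ and $\abs{\mu-J_\H(x_0)}$ finely enough that the nonempty window $c_I\norm{w}<\lambda<\tilde c_I T$ can be guaranteed, together with the passage from nondegeneracy to genuine uniqueness.
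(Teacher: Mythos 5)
Your proposal is correct and follows essentially the same route as the paper: verify $H_I^*\xsol=\mu\UI$ and collapse the residual to $\Gamma_I^\bot(y-\mu\PhiIT\UI)$, build the dual certificate by taking $z_I$ as the maximizer in $\IC_H(I)$ so that $\IC_H(I)>0$ absorbs the $O(\norm{w}/\lambda)$ noise term, control the inactive constraints via the margin $T$ to get the window $c_I\norm{w}<\lambda<\tilde c_I T$, and conclude uniqueness from nondegeneracy of the certificate plus restricted injectivity~\eqref{eq:ci}. The only (cosmetic) difference is that the paper derives the value of $\mu$ from the simplex normalization of the restricted problem's optimality conditions, whereas you take \eqref{eq:mu-solution-small-noise} as given and check that the normalization $\dotp{\UI}{\alpha}=1$ holds.
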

Observe that if $\lambda$ is chosen proportional to the noise level, then $\norm{\xsol - x_0}_2 = O(\norm{w}_2)$.
The following proposition proves that the condition $\IC_H(I)>0$ is almost a necessary condition to ensure the stability of the $H$-support. Its proof is omitted for obvious space limitation reasons.
\begin{prop}\label{prop:small-noise-ne}
  Let $x_0 \in \RR^N \setminus \ens{0}$ and $I$ its $H$-support such that~\eqref{eq:ci} holds.
  Let $y = \Phi x_0 + w$.
  Suppose that $\PhiIT \UI \neq 0$ and $\IC_H(I) < 0$.
  If $\frac{\norm{w}}{\lambda} < \frac{1}{c_I}$ then for any solution of~\eqref{eq:lasso-h}, we have $\hsupp(x_0) \neq \hsupp(\xsol)$.
\end{prop}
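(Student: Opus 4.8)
The plan is to argue by contradiction through the first-order optimality conditions and the associated dual certificate. Suppose some solution $\xsol$ of~\eqref{eq:lasso-h} satisfied $\hsupp(\xsol) = I$. Since $J_\H$ is a finite maximum of the linear forms $x\mapsto\dotp{x}{h_i}$, its subdifferential at any nonzero point is $\partial J_\H(\xsol) = \mathrm{conv}\,\enscond{h_i}{i\in\hsupp(\xsol)}$, so the optimality condition $\Phi^*(y-\Phi\xsol)\in\lambda\,\partial J_\H(\xsol)$ furnishes a certificate $\alpha\in\RR^{\abs I}$ with $\alpha\geq 0$, $\sum_{i\in I}\alpha_i = 1$, and $\Phi^*(y-\Phi\xsol) = \lambda H_I\alpha$. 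The whole argument reduces to showing that, when $\IC_H(I)<0$ and the noise is small, no such nonnegative $\alpha$ can exist.

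First I would put the certificate into explicit form. From $\hsupp(\xsol)=I$ we get $H_I^*\xsol = \mu\UI$ with $\mu = J_\H(\xsol)>0$, hence $\xsol = \mu H_I^{+,*}\UI + U\beta$ for some $\beta$ and $U$ a basis of $\Ker H_I^*$. Projecting the optimality condition onto $\Ker H_I^*$ by $U^*$ (using $U^*H_I=0$), condition~\eqref{eq:ci} makes $U^*\Phi^*\Phi U = M_I^{-1}$ invertible and pins $\beta$, which yields the residual $y-\Phi\xsol = \Gamma_I^\bot(y-\mu\PhiIT\UI)$. Reading off the component of the certificate in the range of $H_I^*$ through $\PhiIT^* = H_I^+\Phi^*$, and using that $x_0$ (which also has $\H$-support $I$) satisfies $\Gamma_I^\bot\Phi x_0 = J_H(x_0)\,\Gamma_I^\bot\PhiIT\UI$, I obtain $\lambda\alpha = (J_H(x_0)-\mu)\,d + \PhiIT^*\Gamma_I^\bot w + z_I$, where $d = \PhiIT^*\Gamma_I^\bot\PhiIT\UI$ and $z_I\in\Ker H_I$ is the only remaining degree of freedom.

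Next I would pin the scalar $s := (J_H(x_0)-\mu)/\lambda$ using the normalization $\sum_{i\in I}\alpha_i = \UI^*\alpha = 1$. Since $\hsupp(x_0)=I$ gives $\UI = H_I^*(x_0/J_H(x_0))$ in the range of $H_I^*$, every $z_I\in\Ker H_I$ obeys $\UI^*z_I = 0$; combined with $\UI^* d = \norm{\PhiIT\UI}_{\Gamma_I^\bot}^2>0$, this forces $s = (\lambda - \dotp{\PhiIT\UI}{w}_{\Gamma_I^\bot})/(\lambda\norm{\PhiIT\UI}_{\Gamma_I^\bot}^2)$, so that $\mu = J_H(x_0) - \lambda s$ reproduces the value appearing in Theorem~\ref{thm:small-noise}; bounding $\abs{\dotp{\PhiIT\UI}{w}_{\Gamma_I^\bot}}$ by Cauchy--Schwarz in the $\Gamma_I^\bot$ metric shows $s>0$ as soon as $\norm w/\lambda$ is small.

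The contradiction then comes from the sign constraint. Nonnegativity of $\alpha$ requires $\max_{z_I\in\Ker H_I}\min_{i\in I}\alpha_i\geq 0$. Because $s>0$ and $\Ker H_I$ is a subspace, the substitution $z_I = s\,z_I'$ rewrites this quantity as $\max_{z_I'}\min_{i\in I}\bigl(s(d+z_I')_i + \tfrac1\lambda(\PhiIT^*\Gamma_I^\bot w)_i\bigr)$, which is bounded above by $s\,\IC_H(I) + \tfrac1\lambda\normi{\PhiIT^*\Gamma_I^\bot w}$ by the very definition of $\IC_H(I)$. Since $\IC_H(I)<0$ and $s>0$, the first term equals $-s\abs{\IC_H(I)}$, while the noise term is of order $\norm w/\lambda$; taking $c_I$ to be the constant of Theorem~\ref{thm:small-noise} — built from $\norm{\PhiIT\UI}_{\Gamma_I^\bot}^2$, the operator norm of $\PhiIT^*\Gamma_I^\bot$ and $\abs{\IC_H(I)}$ — the hypothesis $\norm w/\lambda<1/c_I$ guarantees $\tfrac1\lambda\normi{\PhiIT^*\Gamma_I^\bot w} < s\,\abs{\IC_H(I)}$, so the maximum is strictly negative. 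Thus no nonnegative certificate exists, contradicting optimality, and therefore every solution of~\eqref{eq:lasso-h} satisfies $\hsupp(\xsol)\neq\hsupp(x_0)$. I expect the main obstacle to be precisely this quantitative step: controlling $\tfrac1\lambda\normi{\PhiIT^*\Gamma_I^\bot w}$ uniformly against $s\,\abs{\IC_H(I)}$ and thereby isolating the exact threshold $1/c_I$, which hinges on the lower bound for $s$ (equivalently the small-noise control of $\dotp{\PhiIT\UI}{w}_{\Gamma_I^\bot}$) that legitimizes the rescaling $z_I = s\,z_I'$.
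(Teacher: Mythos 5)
The paper omits its own proof of this proposition, so there is no reference argument to compare against line by line; judged on its merits, your proof is correct and is precisely the route the paper's machinery dictates: you re-derive Lemma~\ref{lem:foc-h-modified} inline to cast the optimality certificate as $\lambda\alpha = (\mu_0-\mu)\,d + \PhiIT^*\Gamma_I^\bot w + z_I$ with $d = \PhiIT^*\Gamma_I^\bot\PhiIT\UI$ and $z_I \in \Ker H_I$, pin $\mu$ via $\dotp{z_I}{\UI}=0$ exactly as in step \textbf{3} of the proof of Theorem~\ref{thm:small-noise}, and the rescaling $z_I = s z_I'$ (legitimate since $s>0$ and $\Ker H_I$ is a subspace) correctly bounds the attainable $\min_{i\in I}\alpha_i$ by $s\,\IC_H(I) + \lambda^{-1}\normi{\PhiIT^*\Gamma_I^\bot w}$, which is negative below the noise threshold, contradicting $\alpha \in \SimplexI$. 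One point worth making explicit: the division by $\norm{\PhiIT\UI}_{\Gamma_I^\bot}^2$, which you assert is positive, in fact follows from the hypothesis $\IC_H(I)<0$ rather than from $\PhiIT\UI\neq 0$ alone --- if $\Gamma_I^\bot\PhiIT\UI = 0$ then $d=0$ and taking $z_I=0$ in the definition of $\IC_H(I)$ would give $\IC_H(I)\geq 0$.
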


\subsection{Noiseless Identifiability}
\label{sec:contrib-noiseless}

When there is no noise, the following result, which is a straightforward consequence of Theorem~\ref{thm:small-noise}, shows that the condition $\IC_H(I) > 0$ implies signal identifiability.
\begin{thm}\label{thm:noiseless}
  Let $x_0 \in \RR^N \setminus \ens{0}$ and $I$ its $H$-support.
  Suppose that $\PhiIT \UI \neq 0$ and $\IC_H(I) > 0$.
  Then the vector $x_0$ is the unique solution of~\eqref{eq:bp-h}.
\end{thm}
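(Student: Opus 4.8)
The plan is to read Theorem~\ref{thm:noiseless} as the $w=0$, $\lambda\to0^+$ limit of Theorem~\ref{thm:small-noise}, and to repackage that limit as a dual-certificate argument for \eqref{eq:bp-h}. The facts about $J_\H$ I would use are that, being the support function of $\{h_i\}_i$, its subdifferential is the exposed face $\partial J_\H(x)=\mathrm{conv}\{h_i : i\in\hsupp(x)\}$, which depends on $x$ only through $\hsupp(x)$, and that for $I=\hsupp(x_0)$ one has $\relint\partial J_\H(x_0)=\{H_I\alpha : \alpha>0,\ \UI^*\alpha=1\}$ (strict convex combinations). Since the only constraint in \eqref{eq:bp-h} is $\Phi x=\Phi x_0$, a feasible $x_0$ is a minimizer as soon as there is a certificate $\eta=\Phi^*p\in\partial J_\H(x_0)$, and the minimizer is unique once $\eta\in\relint\partial J_\H(x_0)$ and \eqref{eq:ci} holds; so the whole proof reduces to producing such a certificate out of $\IC_H(I)>0$.

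First I would set $w=0$ in Theorem~\ref{thm:small-noise}. The constant inequality $\norm{w}_2/T<\tilde c_I/c_I$ is then vacuous, so for every $\lambda\in(0,T\tilde c_I)$ the problem \eqref{eq:lasso-h} with $y=\Phi x_0$ has the stated unique solution $\xsol$ with $\hsupp(\xsol)=I$. Computing the residual from the closed form (using $\Phi H_I^{+,*}=\PhiIT$ and $\Phi UM_IU^*\Phi^*=\Gamma_I$) gives $y-\Phi\xsol=(J_\H(x_0)-\mu)\,\Gamma_I^\bot\PhiIT\UI$, and \eqref{eq:mu-solution-small-noise} at $w=0$ turns this into $(\lambda/c)\,\Gamma_I^\bot\PhiIT\UI$ with $c=\norm{\PhiIT\UI}_{\Gamma_I^\bot}^2$. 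Hence the certificate $\eta=\Phi^*p=\lambda^{-1}\Phi^*(y-\Phi\xsol)$ furnished by the first-order condition of \eqref{eq:lasso-h} is in fact independent of $\lambda$:
\[
\eta=\Phi^*p,\qquad p=\frac{1}{c}\,\Gamma_I^\bot\PhiIT\UI,\qquad c=\norm{\PhiIT\UI}_{\Gamma_I^\bot}^2 .
\]

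Because $\hsupp(\xsol)=I$, the inclusion $\eta\in\partial J_\H(\xsol)$ given by optimality is the same as $\eta\in\partial J_\H(x_0)$, so $\eta=H_I\alpha$ for some $\alpha\ge0$ with $\UI^*\alpha=1$. Applying $H_I^+$ and using $\PhiIT^*=H_I^+\Phi^*$ fixes the component of $\alpha$ in $(\Ker H_I)^\bot$, so the admissible representatives are exactly
\[
\alpha=\frac{1}{c}\bigl(\PhiIT^*\Gamma_I^\bot\PhiIT\UI+z_I\bigr),\qquad z_I\in\Ker H_I ,
\]
and the normalization $\UI^*\alpha=1$ is automatic since $\UI\in\mathrm{Im}\,H_I^*=(\Ker H_I)^\bot$. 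The key point is that $\eta\in\relint\partial J_\H(x_0)$, i.e.\ some representative is coordinatewise positive, is exactly the statement $\max_{z_I\in\Ker H_I}\min_{i\in I}(\PhiIT^*\Gamma_I^\bot\PhiIT\UI+z_I)_i>0$, namely $\IC_H(I)>0$; summing the entries of such a positive vector moreover gives $c>0$, so $p$ is well defined.

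Optimality of $x_0$ is then immediate: for feasible $x$ the subgradient inequality gives $J_\H(x)\ge J_\H(x_0)+\dotp{p}{\Phi(x-x_0)}=J_\H(x_0)$. For uniqueness I would take any minimizer $x'$; equality of the optimal values together with $\dotp{p}{\Phi(x'-x_0)}=0$ forces equality in the subgradient inequality, hence $\eta\in\partial J_\H(x')$ as well, so $\dotp{\eta}{x'}=J_\H(x')$. Since $\eta$ is a strictly positive combination of $\{h_i\}_{i\in I}$, this forces $\dotp{h_i}{x'}=J_\H(x')$ for every $i\in I$, i.e.\ $I\subseteq\hsupp(x')$; consequently $H_I^*(x'-x_0)=0$ while $\Phi(x'-x_0)=0$, and \eqref{eq:ci} yields $x'=x_0$. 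The crux, which I expect to be the main obstacle, is the upgrade from $\eta\in\partial J_\H(x_0)$ (which alone only says $x_0$ is \emph{a} minimizer) to $\eta\in\relint\partial J_\H(x_0)$: this is precisely the content of $\IC_H(I)>0$, and it is what, combined with \eqref{eq:ci}, promotes optimality to uniqueness; the rest is bookkeeping with $\PhiIT=\Phi H_I^{+,*}$, $\PhiIT^*=H_I^+\Phi^*$ and $\UI\perp\Ker H_I$.
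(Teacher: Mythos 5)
Your proof is correct, and the certificate you construct, $\eta=\Phi^*p$ with $p=\norm{\PhiIT\UI}_{\Gamma_I^\bot}^{-2}\Gamma_I^\bot\PhiIT\UI$, is exactly the one the paper uses in its uniqueness step ($\eta=\Gamma_I^\bot\PhiIT\UI$, up to the normalization by $c=\norm{\PhiIT\UI}_{\Gamma_I^\bot}^2$ which the paper's displayed identity silently drops). The uniqueness half of your argument matches the paper's: you identify $\eta\in\relint\partial J_\H(x_0)$ with $\IC_H(I)>0$ and combine it with \eqref{eq:ci}; the paper invokes Lemma~\ref{lem:uniqueness-h} at that point, whereas you inline a proof of it (a strictly positive combination forces $I\subseteq\hsupp(x')$, hence $x'-x_0\in\Ker H_I^*\cap\Ker\Phi=\ens{0}$) — a welcome addition since that lemma's proof is omitted in the paper. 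Where you genuinely diverge is in showing that $x_0$ is a minimizer at all: the paper goes through Lemma~\ref{lem:small-without} and a $\lambda\to0^+$ continuity argument ($J_H(\xsol_\lambda)<J_H(\tilde x)$ for every feasible $\tilde x$, then pass to the limit), while you get optimality for free from the subgradient inequality once the certificate is in hand. Your route is shorter and, in fact, once one checks directly that $\Phi^*\Gamma_I^\bot\PhiIT\UI\in\mathrm{Im}(H_I)$ (true because $\Gamma_I^\bot$ annihilates $\Phi\Ker H_I^*$), the detour through Theorem~\ref{thm:small-noise} and the closed form of $\xsol_\lambda$ becomes dispensable altogether. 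One detail worth keeping explicit, as you did: pairing a strictly positive representative with $\UI$ yields $\norm{\PhiIT\UI}_{\Gamma_I^\bot}^2>0$, which is needed for $p$ to be well defined, since $\PhiIT\UI\neq0$ alone does not exclude $\PhiIT\UI\in\Phi\Ker H_I^*$.
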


\section{Proofs}
\label{sec:proof}

\subsection{Preparatory Lemmata}

We recall the definition of the subdifferential of a convex function $f$ at the point $x$ is the set $\partial f(x)$ is
\begin{equation*}
  \partial f(x) = \enscond{g \in \RR^N}{f(y) \geq f(x) + \dotp{g}{y-x}} .
\end{equation*}
The following lemma, which is a direct consequence of the properties of the $\max$ function, gives the subdifferential of the regularization function $J_H$.
\begin{lem}\label{lem:subd-h}
  The subdifferential $\partial J_\H$ at $x \in \RR^N$ reads
  \begin{equation*}
    \partial J_\H(x) = \H_I \SimplexI
  \end{equation*}
  where $I = \hsupp(x)$ and $\SimplexI$ is the canonical simplex on $\RR^{\abs{I}}$:
  \begin{equation*}
    \SimplexI = \enscond{v_I \in \RR^{\abs{I}}}{v_I \geq 0, \dotp{v_I}{\UI} = 1} .
  \end{equation*}
\end{lem}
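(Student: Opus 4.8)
The plan is to prove the two inclusions separately, using only the definition of the subdifferential recalled above together with the elementary fact that, by Definition~\ref{def:supp}, one has $\dotp{x}{h_i} = J_\H(x)$ for every $i \in I$ and $\dotp{x}{h_i} < J_\H(x)$ for every $i \notin I$. Note that $\H_I \SimplexI = \mathrm{conv}\enscond{h_i}{i \in I}$, the convex hull of the active columns.

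First I would establish the easy inclusion $\H_I \SimplexI \subseteq \partial J_\H(x)$. Take $g = \H_I v_I = \sum_{i \in I} (v_I)_i h_i$ with $v_I \in \SimplexI$. For any $y \in \RR^N$ and any $i \in I$ one has $\dotp{y}{h_i} \leq J_\H(y)$, whereas $\dotp{x}{h_i} = J_\H(x)$. Since the coefficients $(v_I)_i$ are nonnegative and $\dotp{v_I}{\UI} = 1$, weighting and summing these relations yields $\dotp{g}{y} \leq J_\H(y)$ and $\dotp{g}{x} = J_\H(x)$, whence $J_\H(y) \geq J_\H(x) + \dotp{g}{y-x}$. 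This is exactly the subgradient inequality, so $g \in \partial J_\H(x)$.

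The converse inclusion $\partial J_\H(x) \subseteq \H_I \SimplexI$ is the main obstacle. I would argue by contradiction: suppose $g \in \partial J_\H(x)$ but $g \notin \H_I \SimplexI$. As $\H_I \SimplexI$ is a nonempty compact convex set, the strict separation theorem provides a direction $d \in \RR^N$ with $\dotp{g}{d} > \max_{i \in I} \dotp{h_i}{d}$. The delicate point is then to control the active set along the perturbation $y = x + t d$: because $\dotp{x}{h_i} < J_\H(x)$ strictly for every $i \notin I$, there is a threshold $t_0 > 0$ such that for all $0 < t < t_0$ the maximum defining $J_\H(y)$ is still attained only on indices in $I$, giving $J_\H(x + t d) = J_\H(x) + t \max_{i \in I} \dotp{h_i}{d}$.

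Finally I would feed this perturbation into the subgradient inequality. Evaluating $J_\H(x + t d) \geq J_\H(x) + t \dotp{g}{d}$ and dividing by $t > 0$ yields $\max_{i \in I} \dotp{h_i}{d} \geq \dotp{g}{d}$, which directly contradicts the separation inequality. Hence $g$ must lie in $\H_I \SimplexI$, and combining both inclusions proves the claim. The only subtle ingredient is the stability of the active set $I$ under the small perturbation, which is precisely where the strict inequality $\dotp{x}{h_i} < J_\H(x)$ for inactive indices is used.
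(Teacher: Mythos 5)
Your proof is correct: the easy inclusion via convex combinations of active subgradient inequalities, and the converse via strict separation plus local stability of the active set, is precisely the standard argument behind the max-rule for the subdifferential of a pointwise maximum of linear forms. The paper does not write this proof out at all (it merely invokes the lemma as "a direct consequence of the properties of the $\max$ function"), so your argument is a sound and complete filling-in of exactly the fact the authors are appealing to.
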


A point $\xsol$ is a minimizer of $\min_x f(x)$ if, and only if, $0 \in \partial f(\xsol)$.
Thanks to Lemma~\ref{lem:subd-h}, this gives the first-order condition for the problem~\eqref{eq:lasso-h}.
\begin{lem}\label{lem:foc}
  A vector $\xsol$ is a solution of~\eqref{eq:lasso-h} if, and only if, there exists $v_I \in \SimplexI$ such that
  \begin{equation*}
    \Phi^*(\Phi x - y) + \lambda H_I v_I = 0 ,
  \end{equation*}
  where $I = \hsupp(x)$.
\end{lem}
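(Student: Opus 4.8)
The plan is to invoke the standard variational characterization of minimizers of a convex function, recalled just above the statement, namely that $\xsol$ solves~\eqref{eq:lasso-h} if and only if $0 \in \partial f(\xsol)$, where $f(x) = \frac{1}{2}\norm{y - \Phi x}^2 + \lambda J_\H(x)$ is the (convex) objective. The whole argument then reduces to computing $\partial f$ explicitly.

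First I would split $f = g + \lambda J_\H$ with $g(x) = \frac{1}{2}\norm{y - \Phi x}^2$. The map $g$ is a finite, convex, everywhere differentiable quadratic, with gradient $\nabla g(x) = \Phi^*(\Phi x - y)$ obtained by the chain rule. Next, I would apply the Moreau--Rockafellar sum rule for subdifferentials, which yields $\partial f(x) = \nabla g(x) + \lambda \partial J_\H(x)$. This equality (and not merely an inclusion) is legitimate here because both $g$ and $J_\H$ are finite on all of $\RR^N$, so the relative interiors of their domains trivially intersect and no constraint qualification is needed; moreover, since $g$ is differentiable, $\partial g(x) = \ens{\nabla g(x)}$ is a singleton.

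It then remains to substitute the expression for $\partial J_\H$ provided by Lemma~\ref{lem:subd-h}: with $I = \hsupp(x)$ one has $\partial J_\H(x) = H_I \SimplexI$. The optimality condition $0 \in \partial f(\xsol)$ becomes $0 \in \Phi^*(\Phi\xsol - y) + \lambda H_I \SimplexI$, which is exactly the assertion that there exists $v_I \in \SimplexI$ with $\Phi^*(\Phi\xsol - y) + \lambda H_I v_I = 0$. Since the characterization $0 \in \partial f(\xsol)$ is itself an equivalence for convex $f$, both implications of the lemma follow at once.

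I do not expect any real difficulty here: the only point that deserves care is the justification of the subdifferential sum rule, and this is immediate because $g$ has full domain and is smooth, so the whole proof is essentially bookkeeping built on Lemma~\ref{lem:subd-h}.
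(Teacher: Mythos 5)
Your proof is correct and follows exactly the route the paper intends: the Fermat rule $0 \in \partial f(\xsol)$ combined with the sum rule (valid since the quadratic term is smooth and finite everywhere) and the expression of $\partial J_\H$ from Lemma~\ref{lem:subd-h}. The paper states this as an immediate consequence without writing out the details, so your version is simply a more explicit account of the same argument.
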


We now introduce the following so-called source condition.\\
\textbf{$(\mathbf{SC}_x)$:}
  For $I = \hsupp(x)$, there exists $\eta$ and $v_I \in \SimplexI$ such that:
  \begin{equation*}
    \Phi^* \eta = H_I v_I \in \partial J_H(x) .
  \end{equation*}
  
Under the source condition, a sufficient uniqueness condition can be derived when $v_I$ lives in the relative interior of $\SimplexI$ which is
\begin{equation*}
  \relint \SimplexI = \enscond{v_I \in \RR^{\abs{I}}}{v_I > 0, \dotp{v_I}{\UI} = 1} .
\end{equation*}
\begin{lem}\label{lem:uniqueness-h}
  Let $\xsol$ be a minimizer of~\eqref{eq:lasso-h} (resp.~\eqref{eq:bp-h}) and $I = \hsupp(x^\star)$.
  Assume that $(\mathbf{SC}_{\xsol})$ is verified with $v_I \in \relint \SimplexI$, and that~\eqref{eq:ci} holds.
  Then $\xsol$ is the unique solution of~\eqref{eq:lasso-h} (resp.~\eqref{eq:bp-h}).
\end{lem}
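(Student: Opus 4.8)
The plan is to show that every minimizer $x'$ of the problem under consideration must coincide with $\xsol$. First I would record two facts valid for any such $x'$. For the penalized problem~\eqref{eq:lasso-h}, convexity of the objective makes the whole segment $[\xsol,x']$ optimal; since the data term $\tfrac12\norm{y-\Phi\cdot}^2$ is a strictly convex function of $\Phi x$, its affineness along this segment (the $t^2$-coefficient $\norm{\Phi(x'-\xsol)}^2$ of the fidelity along the segment must vanish) forces $\Phi x'=\Phi\xsol$, and equality of the objective values then gives $J_H(x')=J_H(\xsol)$. For the constrained problem~\eqref{eq:bp-h}, $\Phi x'=\Phi\xsol=y$ holds by feasibility and $J_H(x')=J_H(\xsol)$ because both are optimal. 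In either case I obtain $x'-\xsol\in\Ker\Phi$ together with equality of the regularizer values.

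Next I would transfer the source certificate to $x'$. By $(\mathbf{SC}_{\xsol})$ there is $\eta$ with $\Phi^*\eta=H_I v_I\in\partial J_H(\xsol)$ and $v_I\in\relint\SimplexI$. The subgradient inequality at $\xsol$ evaluated at $x'$ reads
\begin{equation*}
  J_H(x')\geq J_H(\xsol)+\dotp{\Phi^*\eta}{x'-\xsol}=J_H(\xsol)+\dotp{\eta}{\Phi x'-\Phi\xsol}=J_H(\xsol),
\end{equation*}
using $\Phi x'=\Phi\xsol$. Since $J_H(x')=J_H(\xsol)$, this is an equality, and for a convex function the equality $J_H(x')=J_H(\xsol)+\dotp{g}{x'-\xsol}$ with $g=\Phi^*\eta\in\partial J_H(\xsol)$ propagates the subgradient, giving $\Phi^*\eta\in\partial J_H(x')$.

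The key step is to deduce $I\subseteq I':=\hsupp(x')$. By Lemma~\ref{lem:subd-h} every $g\in\partial J_H(x')$ lies on the face exposed by $x'$, hence satisfies $\dotp{x'}{g}=J_H(x')$; applied to $\Phi^*\eta=\sum_{i\in I}(v_I)_i h_i$ this yields $\sum_{i\in I}(v_I)_i\dotp{x'}{h_i}=J_H(x')$. Because $\dotp{x'}{h_i}\leq J_H(x')$ for each $i$ while $\sum_{i\in I}(v_I)_i=1$ and, crucially, $(v_I)_i>0$ for all $i$ by the relative-interior hypothesis, every term must be tight: $\dotp{x'}{h_i}=J_H(x')$ for all $i\in I$, i.e. $I\subseteq I'$. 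I expect this to be the main obstacle, and it is exactly here that $v_I\in\relint\SimplexI$ is indispensable: a boundary multiplier would only pin down a strict subset of $I$.

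Finally I would combine the supports. For $i\in I$ we have both $\dotp{\xsol}{h_i}=J_H(\xsol)$ and, from $I\subseteq I'$, $\dotp{x'}{h_i}=J_H(x')=J_H(\xsol)$, so $\dotp{x'-\xsol}{h_i}=0$ for every $i\in I$, that is $x'-\xsol\in\Ker H_I^*$. Together with $x'-\xsol\in\Ker\Phi$ from the first step, the restricted injectivity condition~\eqref{eq:ci} forces $x'-\xsol=0$. Hence $x'=\xsol$, and the minimizer is unique.
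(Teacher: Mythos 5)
The paper explicitly omits the proof of this lemma (``due to lack of space''), so there is no reference argument to compare against; your proof is correct and is the canonical argument for this kind of uniqueness statement. Each step checks out: the strict convexity of the fidelity term in $\Phi x$ forces $\Phi x' = \Phi \xsol$ for any other minimizer $x'$, the equality case of the subgradient inequality propagates the certificate to $\partial J_H(x')$, the relative-interior condition $v_I \in \relint \SimplexI$ is used exactly where it must be (to force every $i \in I$ into $\hsupp(x')$ via the strictly positive convex combination), and the conclusion $x'-\xsol \in \Ker \Phi \cap \Ker H_I^* = \ens{0}$ correctly invokes~\eqref{eq:ci}.
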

The proof of this lemma is omitted due to lack of space. Observe that in the noiseless case, if the assumptions of Lemma~\ref{lem:uniqueness-h} hold at $x_0$, then the latter is exactly recovered by solving \eqref{eq:bp-h}.

\begin{lem}\label{lem:foc-h-temp}
  Let $\xsol \in \RR^N$ and $I = \hsupp(\xsol)$.
  Assume~\eqref{eq:ci} holds. 
  Let $U$ be any basis of $\Ker H_I^*$.
  There exists $z_I \in \Ker H_I$ such that
  \begin{gather*}
    U^* \Phi^* (\Phi \xsol -y ) = 0 \\
    v_I = z_I + \frac{1}{\lambda} H_I^+ \Phi^*(y - \Phi \xsol) \in \SimplexI,
  \end{gather*}
  if, and only if, $\xsol$ is a solution of~\eqref{eq:lasso-h}.
  Moreover, if $v_I \in \relint \SimplexI$, then $\xsol$ is the unique solution of~\eqref{eq:lasso-h}.
\end{lem}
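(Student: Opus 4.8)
The plan is to start from the first-order condition of Lemma~\ref{lem:foc} and split it along the orthogonal decomposition $\RR^N = \operatorname{Im} H_I \oplus \Ker H_I^*$, which is exactly the geometric content encoded by the two displayed identities. Writing $g = \Phi^*(\Phi\xsol - y)$ and recalling $I = \hsupp(\xsol)$, Lemma~\ref{lem:foc} says that $\xsol$ solves \eqref{eq:lasso-h} if and only if $g + \lambda H_I v_I = 0$ for some $v_I \in \SimplexI$. Since $H_I v_I \in \operatorname{Im} H_I = (\Ker H_I^*)^\perp$, I would project this single equation onto $\Ker H_I^*$ and onto $\operatorname{Im} H_I$ separately, and show each projection reproduces one of the two displayed conditions.

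For the forward direction, applying $U^*$ (whose columns span $\Ker H_I^*$, so that $U^* H_I = 0$) annihilates the $\lambda H_I v_I$ term and leaves $U^* g = 0$, i.e. the first equation. To recover the second, apply $H_I^+$ to $\lambda H_I v_I = -g$ and use $H_I^+ H_I = \Id - P_{\Ker H_I}$, the orthogonal projector onto $(\Ker H_I)^\perp$; setting $z_I = P_{\Ker H_I} v_I \in \Ker H_I$ then yields precisely the claimed formula for $v_I$.

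For the converse, the key point — and the only place where \emph{both} displayed conditions are genuinely needed — is that the first equation $U^*\Phi^*(\Phi\xsol - y) = 0$ forces $\Phi^*(y - \Phi\xsol) \in \operatorname{Im} H_I$, so that $H_I H_I^+ \Phi^*(y-\Phi\xsol) = \Phi^*(y-\Phi\xsol)$. Combining this with $H_I z_I = 0$, the definition of $v_I$ gives $\lambda H_I v_I = \Phi^*(y - \Phi\xsol)$, which is exactly the optimality condition; since $v_I \in \SimplexI$ by hypothesis, Lemma~\ref{lem:foc} concludes. I expect the main (and essentially only) obstacle to be bookkeeping: keeping the two projectors $H_I H_I^+ = P_{\operatorname{Im} H_I}$ and $H_I^+ H_I = P_{(\Ker H_I)^\perp}$ straight, and observing that \eqref{eq:ci} plays no role in the equivalence itself — it enters only through the uniqueness clause below.

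Finally, for the uniqueness statement I would check that the source condition $(\mathbf{SC}_{\xsol})$ holds: the recovered identity reads $H_I v_I = \Phi^*\eta$ with $\eta = (y - \Phi\xsol)/\lambda$, and $H_I v_I \in \partial J_H(\xsol) = H_I\SimplexI$ by Lemma~\ref{lem:subd-h}. If in addition $v_I \in \relint\SimplexI$, then Lemma~\ref{lem:uniqueness-h}, invoked together with \eqref{eq:ci}, delivers that $\xsol$ is the unique minimizer of \eqref{eq:lasso-h}.
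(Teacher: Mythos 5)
Your proof is correct and takes essentially the same route as the paper's: both reduce the statement to the first-order condition of Lemma~\ref{lem:foc} and exploit the Moore--Penrose projector identities, namely that $\Id - H_I H_I^+$ is the orthogonal projection onto $\Ker H_I^*$ (so the first displayed equation kills that component) and that $H_I z_I = 0$ recovers $\lambda H_I v_I = \Phi^*(y-\Phi\xsol)$. You are in fact slightly more complete than the paper, which only writes out the sufficiency direction, and your remark that \eqref{eq:ci} enters only through the uniqueness clause via Lemma~\ref{lem:uniqueness-h} is accurate.
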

\begin{proof}
  We compute
  \begin{align*}
     & \Phi^* (\Phi \xsol - y ) + \lambda H_I v_I \\
    =& \Phi^* (\Phi \xsol - y ) + \lambda H_I \left( z_I + \frac{1}{\lambda} H_I^+ \Phi^*(y - \Phi \xsol) \right) \\
    =& (\Id - H_I H_I^+) \Phi^* (\Phi \xsol - y ) = \proj_{H_I^*} \left( \Phi^* (\Phi \xsol - y ) \right) = 0 ,
  \end{align*}
  where $\proj_{H_I^*}$ is the projection on $\Ker H_I^*$.
  Hence, $\xsol$ is a solution of~\eqref{eq:lasso-h}.
  If $v_I \in \relint \SimplexI$, then according to Lemma~\ref{lem:uniqueness-h}, $\xsol$ is the unique solution.
\end{proof}

The following lemma is a simplified rewriting of the condition introduced in Lemma~\ref{lem:foc-h-temp}.
\begin{lem}\label{lem:foc-h-modified}
  Let $\xsol \in \RR^N$, $I = \hsupp(\xsol)$ and $\mu = J_H(\xsol)$.
  Assume~\eqref{eq:ci} holds. 
  Let $U$ be any basis of $\Ker H_I^*$.
  There exists $z \in \Ker H_I$ such that
  \begin{equation*}
    v_I = z_I + \frac{1}{\lambda} \PhiIT^* \Gamma_I^\bot (y - \mu \PhiIT \UI) \in \SimplexI,
  \end{equation*}
  if, and only if, $\xsol$ is a solution of~\eqref{eq:lasso-h}.
  Moreover, if $v_I \in \relint \SimplexI$, then $\xsol$ is the unique solution of~\eqref{eq:lasso-h}.
\end{lem}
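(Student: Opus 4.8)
The plan is to derive this lemma from Lemma~\ref{lem:foc-h-temp} by a purely algebraic simplification of its two conditions, so that no new first-order analysis is needed. The starting observation is that, because $I = \hsupp(\xsol)$ and $\mu = J_H(\xsol)$, all the active constraints equal $\mu$, which I would encode as the compact identity $H_I^* \xsol = \mu \UI$. This shows $\xsol - \mu H_I^{+,*}\UI \in \Ker H_I^*$, so that $\xsol = \mu H_I^{+,*}\UI + U \alpha$ for a unique coefficient vector $\alpha$, and therefore $\Phi \xsol = \mu \PhiIT \UI + \Phi U \alpha$ after using $\PhiIT = \Phi H_I^{+,*}$. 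The term $U\alpha$ is the only part of $\xsol$ not already determined by $I$ and $\mu$, and the whole point of the rewriting is to make it disappear.

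First I would record two elementary identities. Taking adjoints in the definition of $\PhiIT$ gives $\PhiIT^* = H_I^+ \Phi^*$, so the residual term in Lemma~\ref{lem:foc-h-temp} is literally $\frac{1}{\lambda} H_I^+ \Phi^*(y - \Phi\xsol) = \frac{1}{\lambda}\PhiIT^*(y-\Phi\xsol)$. Second, since $\Gamma_I$ is the orthogonal projector onto $\Phi \Ker H_I^*$ and $U$ is a basis of $\Ker H_I^*$, the columns of $\Phi U$ already lie in this subspace, whence $\Gamma_I \Phi U = \Phi U$ and $\Gamma_I^\bot \Phi U = 0$. This last identity is exactly what will kill the $U\alpha$ contribution.

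Next I would bring in the first equation of Lemma~\ref{lem:foc-h-temp}, namely $U^*\Phi^*(\Phi\xsol - y) = 0$. Since $\Gamma_I = \Phi U M_I U^* \Phi^*$, this equation says precisely $\Gamma_I(y - \Phi\xsol) = 0$, i.e. $y - \Phi\xsol = \Gamma_I^\bot(y-\Phi\xsol)$. Substituting $\Phi\xsol = \mu \PhiIT\UI + \Phi U\alpha$ and using $\Gamma_I^\bot \Phi U = 0$ collapses this to $y-\Phi\xsol = \Gamma_I^\bot(y - \mu \PhiIT\UI)$, an expression that no longer sees $\alpha$. Applying $\PhiIT^* = H_I^+\Phi^*$ then transforms the residual term of Lemma~\ref{lem:foc-h-temp} into $\frac{1}{\lambda}\PhiIT^*\Gamma_I^\bot(y-\mu\PhiIT\UI)$, which is exactly the expression in the present statement. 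The membership $v_I \in \SimplexI$ is thus unchanged, and the uniqueness clause under $v_I \in \relint\SimplexI$ carries over verbatim from Lemma~\ref{lem:foc-h-temp}, itself a consequence of Lemma~\ref{lem:uniqueness-h}.

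The hard part will be the bookkeeping of the equivalence rather than any single difficult step. I must be careful that the orthogonality condition $U^*\Phi^*(\Phi\xsol - y) = 0$ is in force exactly where the replacement of $y-\Phi\xsol$ by $\Gamma_I^\bot(y-\mu\PhiIT\UI)$ is invoked, so that both implications of the stated equivalence are legitimate; this is the only place where the passage between the two forms is not a formal identity. I would also check that the arbitrary basis $U$ drops out of the final condition, which it does because $\PhiIT$, $\Gamma_I$ and $\Gamma_I^\bot$ are all independent of the choice of $U$.
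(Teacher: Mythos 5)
Your proposal follows essentially the same route as the paper's proof: decompose $\xsol = \mu H_I^{+,*}\UI + U\alpha$, use the stationarity condition $U^*\Phi^*(\Phi\xsol - y) = 0$ from Lemma~\ref{lem:foc-h-temp} to replace $y - \Phi\xsol$ by $\Gamma_I^\bot(y - \mu\PhiIT\UI)$, and then apply $\PhiIT^* = H_I^+\Phi^*$; the paper solves for $\alpha$ explicitly to get $\Phi U\alpha = \Gamma_I(y-\mu\PhiIT\UI)$ whereas you kill the $\alpha$-term via $\Gamma_I^\bot\Phi U = 0$, but this is the same computation. You also correctly flag the one genuine subtlety — that the substitution is only a valid equivalence where the orthogonality condition is in force — a point the paper's own proof passes over silently.
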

\begin{proof}
  Note that any vector $x \in \RR^N$ such that the condition~\eqref{eq:ci} holds, where $I$ is the $H$-support of $x$, is such that
  \begin{equation*}
    x = \mu H_I^{+,*} \UI + U \alpha \qwhereq \mu = J_H(x),
  \end{equation*}
  for some coefficients $\alpha$ and $U$ any basis of $\Ker H_I^*$.
  We obtain
  \begin{equation*}
    U \Phi^* (\Phi \xsol - y ) = \mu U \Phi^* \Phi H_I^{+,*} \UI - U\Phi^*y + U \Phi^* \Phi U \alpha = 0
  \end{equation*}
  Since~\eqref{eq:ci} holds, we have
  \begin{equation*}
    \alpha = (U \Phi^* \Phi U \alpha)^{-1} U \Phi^* \left( y - \mu \PhiIT \UI \right) .
  \end{equation*}
  Hence,
  \begin{equation*}
    \Phi U \alpha = \Gamma_I \left( y - \mu \PhiIT \UI \right) .
  \end{equation*}
  Now since, $\xsol = \mu H_I^{+,*} \UI + U \alpha$, one has
  \begin{equation*}
    \Phi \xsol = \mu \PhiIT \UI + \Gamma_I \left( y - \mu \PhiIT \UI \right) = \mu \Gamma_I^\bot \PhiIT \UI + \Gamma_I y .
  \end{equation*}
  Subtracting $y$ and multiplying by $\PhiIT^*$ both sides, and replacing in the expression of $v_I$ in Lemma~\ref{lem:foc-h-temp}, we get the desired result.
\end{proof}

\subsection{Proof of Theorem~\ref{thm:small-noise}}

  Let $I$ be the $H$-support of $x_0$.
  We consider the restriction of~\eqref{eq:lasso-h} to the $H$-support $I$.
  \begin{equation}\label{eq:lasso-h-restricted}\tag{$\Pp_\lambda(y)_I$}
    \xsol =
    \uargmax{\substack{x \in \RR^N \\ \hsupp(x) \subseteq I}}
    \dfrac{1}{2} \norm{y - \Phi x}_2^2
    + J_H(x) .
  \end{equation}
  Thanks to~\eqref{eq:ci}, the objective function is strongly convex on the set of signals of $H$-support I.
  Hence $\xsol$ is uniquely defined.
  The proof is divided in five parts: We give (\textbf{1.}) an implicit form of $\xsol$.
  We check (\textbf{2.}) that the $\H$-support of $\xsol$ is the same as the $\H$-support of $x_0$.
  We provide (\textbf{3.}) the value of $J_H(\xsol)$.
  Using Lemma~\ref{lem:foc-h-modified}, we prove (\textbf{4.}) that $\xsol$ is the unique minimizer of~\eqref{eq:lasso-h}.

  \textbf{1. Expression of $\xsol$.}
  One has $\xsol = \mu \H_I^{+,*} \UI + U \alpha$ where $\mu = J_H(\xsol)$.
  Hence,
  \begin{equation*}
    U^* \Phi^*(\Phi x - y) = \mu U^* \Phi^* \Phi \H_I^{+,*} \UI + (U^* \Phi^* \Phi U) \alpha - U^* \Phi^* y = 0 .
  \end{equation*}
  Thus,
  \begin{equation*}
    U \alpha = U M_I U^* \Phi^*( y - \mu \Phi H_I^{+,*} \UI) .    
  \end{equation*}
  Now, since $y = \Phi x_0 + w$, with $\hsupp(x_0) = I$, then
  \begin{align*}
    \xsol &= \mu H_I^{+,*} \UI + U M_I U^* \Phi^* (y - \mu \Phi H_I^{+,*} \UI) \\
          &= \mu H_I^{+,*} \UI + U M_I U^* \Phi^* ((\mu_0 - \mu) \Phi H_I^{+,*} \UI + w) + U \alpha_0 \\
          &= x_0 - (\mu_0 - \mu) H_I^{+,*} \UI + U M_I U^* \Phi^* ((\mu_0 - \mu) \Phi H_I^{+,*} \UI + w) ,
  \end{align*}
  where $\mu_0 = J_H(x_0)$.
  Hence, $\xsol$ is satisfying
  \begin{equation}\label{eq:sol-res-eq}
    \xsol = x_0 + (\mu_0 - \mu) [ U M_I U^* \Phi^* \Phi - \Id] H_I^{+,*} \UI + U M_I U^* \Phi^* w .
  \end{equation}

  \textbf{2. Checking that the $H$-support of $\xsol$ is $I$.}
  To ensure that the $H$-support of $\xsol$ is $I$ we have to impose that
  \begin{gather*}
    \forall i \in I,\quad \dotp{h_i}{\xsol} = J_H(\xsol) = \mu \\
    \forall j \in I^c,\quad \dotp{h_j}{\xsol} < J_H(\xsol) = \mu .
  \end{gather*}
  The components on $I$ of $\xsol$ are satisfying $H_I^* \xsol = \mu \UI$.
  Since $J_H$ is subadditive, we bound the components on $I^c$ by the triangular inequality on~\eqref{eq:sol-res-eq} to get
  \begin{align*}
    \max_{j \in I^c} \dotp{h_j}{\xsol} \leq
    &\max_{j \in I^c} \dotp{h_j}{x_0} \\
    &+ (\mu_0 - \mu) \normi{H_{I^c}^*[ U M_I U^* \Phi^* \Phi - \Id] H_I^{+,*} \UI} \\
    &+ \normi{H_{I^c}^* U M_I U^* \Phi^* w} .
  \end{align*}
  Denoting
  \begin{align*}
    C_1 &= \normi{H_{I^c}^*[ U M_I U^* \Phi^* \Phi - \Id] H_I^{+,*} \UI} ,\\
    C_2 &= \norm{H_{I^c}^* U M_I U^* \Phi^*}_{2,\infty} ,\\
    T &= \mu_0 - \max_{j \in {I^c}} \dotp{h_j}{x_0} ,
  \end{align*}
  we bound the correlations outside the $H$-support by
  \begin{equation*}
    \max_{j \in I^c} \dotp{h_j}{\xsol} \leq
    \mu_0 - T + (\mu_0 - \mu) C_1 + C_2 \norm{w} . 
  \end{equation*}
  There exists some constants $c_1, c_2$ satisfying $c_1 \norm{w} < c_2 T + \lambda$ such that
  \begin{equation}\label{eq:cond-2}
    0 \leq \mu_0 - T + (\mu_0 - \mu) C_1 + C_2 \norm{w} < \mu
  \end{equation}
  Under this condition, one has 
  \begin{equation*}
    \max_{j \in I^c} \dotp{h_j}{\xsol} < \mu ,
  \end{equation*}
  which proves that $\hsupp(\xsol) = I$.

  \textbf{3. Value of $\mu=J_H(\xsol)$.}
  Using Lemma~\ref{lem:foc-h-modified} with $H = U^* H$, since $\xsol$ is a solution of~\eqref{eq:lasso-h-restricted}, there exists $z_I \in \Ker H_I$ such that
  \begin{equation}\label{eq:inter-cond}
    v_I = z_I + \frac{1}{\lambda} \PhiIT^* \Gamma_I^\bot (y - \mu \PhiIT \UI) \in \SimplexI .
  \end{equation}
  We decompose $x_0$ as
  \begin{equation*}
    x_0 = \mu_0 H_I^{+,*} \UI + U \alpha_0 .
  \end{equation*}
  Since $y = \Phi x_0 + w$, we have
  \begin{equation*}
    \Gamma_I^\bot y = \Gamma_I^\bot ( \mu_0 \PhiIT \UI + \Phi U \alpha_0 + w) .
  \end{equation*}
  Now since
  \begin{equation*}
    \Gamma_I \Phi U \alpha_0 = \Phi U (U^* \Phi^* \Phi U)^{-1} U^* \Phi^* \Phi U \alpha_0 = \Phi U \alpha_0,
  \end{equation*}
  one obtains
  \begin{equation*}
    \Gamma_I^\bot y = \mu_0 \Gamma_I^\bot \PhiIT \UI + \Gamma_I^\bot w .
  \end{equation*}
  Thus, equation~\eqref{eq:inter-cond} equivalently reads
  \begin{equation*}
    v_I = z_I + \frac{1}{\lambda} \PhiIT^* \Gamma_I^\bot \left( (\mu_0 - \mu) \PhiIT \UI + w \right) .
  \end{equation*}
  In particular, $\dotp{v_I}{\UI} = \lambda$.
  Thus,
  \begin{equation*}
       \lambda 
    = \dotp{\lambda v_I}{\UI}
    = \dotp{\lambda \tilde z_I}{\UI} + \dotp{\PhiIT^* \Gamma_I^\bot((\mu_0 - \mu)\PhiIT \UI + w}{\UI} .
  \end{equation*}
  Since $\tilde z_I \in \Ker H_I$, one has $\dotp{z_I}{\UI} = 0$.
  \begin{align*}
       \lambda 
    &= \dotp{\PhiIT^* \Gamma_I^\bot((\mu_0 - \mu)\PhiIT \UI + w}{\UI} \\
    &= (\mu_0 - \mu) \norm{\PhiIT \UI}_{\Gamma_I^\bot}^2 + \dotp{\PhiIT \UI}{w}_{\Gamma_I^\bot} .
  \end{align*}
  Thus the value of $\mu$ is given by
  \begin{equation}\label{eq:sol-res-value}
    \mu = \mu_0 + \frac{\dotp{\PhiIT \UI}{w}_{\Gamma_I^\bot} - \lambda}{\norm{\PhiIT \UI}_{\Gamma_I^\bot}^2} > 0.
  \end{equation}

  \textbf{4. Checking conditions of Lemma~\ref{lem:foc-h-modified}.}
  Consider now the vector $\tilde v_I$ defined by
  \begin{equation*}
    \tilde v_I = \tilde z_I + \frac{1}{\lambda} \PhiIT^* \Gamma_I^\bot \left( (\mu_0 - \mu) \PhiIT \UI + w \right) ,
  \end{equation*}
  where
  \begin{equation*}
    \tilde z_I = 
    \frac{1}{\mu - \mu_0}
    \left(
    \uargmax{z_I \in \Ker H_I}
    \min_{i \in I} 
    (\PhiIT^* \Gamma_I^\bot \PhiIT \UI + z_I)_i
    \right)
  \end{equation*}
  Under condition~\eqref{eq:cond-2}, the $H$-support of $\xsol$ is $I$, hence we only have to check that $\tilde v_I$ is an element of $\relint \SimplexI$.
  Since $\dotp{\tilde z_I}{\UI} = 0$, one has
  \begin{align*}
    &\dotp{\tilde v_I}{\UI} \\
    =& \dotp{z_I + \frac{1}{\lambda} \PhiIT^* \Gamma_I^\bot \left( (\mu_0 - \mu) \PhiIT \UI + w \right)}{\UI}
       + \dotp{\tilde z_I - z_I}{\UI} \\
    =& \dotp{v_I}{\UI} + 0 = \lambda .
  \end{align*}
  Plugging back the expression~\eqref{eq:sol-res-value} of $(\mu_0 - \mu)$ in the definition of $\tilde v_I$, one has
  \begin{equation*}
    \tilde v_I = \tilde z_I + \frac{1}{\lambda}
          \left( \PhiIT^* \Gamma_I^\bot w
                  + \frac{\dotp{\PhiIT \UI}{w}_{\Gamma_I^\bot} - \lambda}{\norm{\PhiIT \UI}_{\Gamma_I^\bot}^2}
                    \PhiIT^* \Gamma_I^\bot \PhiIT \UI
           \right) .
  \end{equation*}
  For some constant $c_3$ such that $c_3 \norm{w} - \IC_H(I) \cdot \lambda > 0$, one has
  \begin{equation*}
    \forall i \in I,\quad v_i > 0 .
  \end{equation*}
  Combining this with the fact that $\dotp{\tilde v_I}{\UI} = \lambda$ proves that $\tilde v_I \in \relint \SimplexI$.
  According to Lemma~\ref{lem:foc-h-modified}, $\xsol$ is the unique minimizer of~\eqref{eq:lasso-h}.
\endIEEEproof

\subsection{Proof of Theorem~\ref{thm:noiseless}}

Taking $w = 0$ in Theorem~\ref{thm:small-noise}, we obtain immediately
\begin{lem}\label{lem:small-without}
  Let $x_0 \in \RR^N \setminus \ens{0}$ and $I$ its $H$-support such that~\eqref{eq:ci} holds.
  Let $y = \Phi x_0$.
  Suppose that $\PhiIT \UI \neq 0$ and $\IC_H(I) > 0$.
  Let $T = \umin{j \in I^c} J_H(x_0) - \dotp{x_0}{h_j} > 0$ and $\lambda < T \tilde c_I$.
  Then,
  \begin{equation*}
    \xsol = x_0 + \frac{\lambda}{\norm{\PhiIT \UI}_{\Gamma_I^\bot}^2} [UM_IU^*\Phi^*\Phi - \Id]H_I^{+,*} \UI ,
  \end{equation*}
  is the unique solution of~\eqref{eq:lasso-h}.
\end{lem}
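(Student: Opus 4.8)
The plan is to obtain this lemma as a direct specialization of Theorem~\ref{thm:small-noise} to the noiseless regime $w = 0$, so that the whole argument reduces to checking that the hypotheses degenerate correctly and then carrying out a substitution. First I would verify that the admissibility conditions of Theorem~\ref{thm:small-noise} hold. Setting $w = 0$ in the constraint $c_I \norm{w}_2 < \lambda < T \tilde c_I$ collapses the lower bound to $0 < \lambda$, which is always satisfied in~\eqref{eq:lasso-h}, leaving only the upper bound $\lambda < T \tilde c_I$ assumed in the statement. Similarly, the condition $\norm{w}_2 / T < \tilde c_I / c_I$ becomes $0 < \tilde c_I / c_I$ and thus holds trivially. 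Hence the range of admissible parameters is nonempty and Theorem~\ref{thm:small-noise} applies verbatim.

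Second, I would compute the value of $\mu = J_H(\xsol)$ from~\eqref{eq:mu-solution-small-noise}. With $w = 0$ the term $\dotp{\PhiIT \UI}{w}_{\Gamma_I^\bot}$ vanishes, yielding $\mu = J_H(x_0) - \lambda / \norm{\PhiIT \UI}_{\Gamma_I^\bot}^2$, and consequently $\mu_0 - \mu = \lambda / \norm{\PhiIT \UI}_{\Gamma_I^\bot}^2$ with $\mu_0 = J_H(x_0)$. This quantity is strictly positive since $\lambda > 0$ and $\PhiIT \UI \neq 0$ ensure the denominator is positive.

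Third, I would substitute into the closed form~\eqref{eq:sol-res-eq} derived inside the proof of Theorem~\ref{thm:small-noise}. Setting $w = 0$ annihilates the term $U M_I U^* \Phi^* w$, and replacing $\mu_0 - \mu$ by $\lambda / \norm{\PhiIT \UI}_{\Gamma_I^\bot}^2$ gives exactly the expression $\xsol = x_0 + \frac{\lambda}{\norm{\PhiIT \UI}_{\Gamma_I^\bot}^2} [UM_IU^*\Phi^*\Phi - \Id]H_I^{+,*} \UI$ stated in the lemma. Uniqueness is not re-established here: it is inherited directly from Theorem~\ref{thm:small-noise}, which already asserts that this $\xsol$ is the unique minimizer of~\eqref{eq:lasso-h}.

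Since the argument is a pure specialization, I expect essentially no obstacle to overcome: every nontrivial ingredient---restricted injectivity~\eqref{eq:ci}, the sign condition $\IC_H(I) > 0$ guaranteeing $\tilde v_I \in \relint \SimplexI$, and uniqueness through Lemma~\ref{lem:foc-h-modified}---has already been secured in the more general noisy setting. The only point meriting a line of care is confirming that the constant inequalities defining the admissible window for $\lambda$ do not degenerate to an empty interval at $w = 0$; as noted above this is immediate, because $\lambda$ then ranges over $(0, T \tilde c_I)$, which is nonempty by the hypothesis $\lambda < T \tilde c_I$.
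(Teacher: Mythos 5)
Your proposal is correct and follows exactly the paper's route: the paper obtains this lemma by "taking $w=0$ in Theorem~\ref{thm:small-noise}", and your substitution of $w=0$ into~\eqref{eq:mu-solution-small-noise} and~\eqref{eq:sol-res-eq} to get $\mu_0 - \mu = \lambda/\norm{\PhiIT \UI}_{\Gamma_I^\bot}^2$ and the stated closed form is precisely that specialization, with uniqueness inherited from the theorem.
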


The following lemma shows that under the same condition, $x_0$ is a solution of~\eqref{eq:bp-h}.
\begin{lem}\label{lem:noiseless-x0}
  Let $x_0 \in \RR^N \setminus \ens{0}$ and $I$ its $H$-support such that~\eqref{eq:ci} holds.
  Let $y = \Phi x_0$.
  Suppose that $\PhiIT \UI \neq 0$ and $\IC_H(I) > 0$.
  Then $x_0$ is a solution of~\eqref{eq:bp-h}.
\end{lem}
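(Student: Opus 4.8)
The plan is to produce a dual certificate for $x_0$ and then invoke the first-order optimality condition for the equality-constrained problem~\eqref{eq:bp-h}. Since the feasible set of~\eqref{eq:bp-h} is the affine space $x_0 + \Ker\Phi$ and $J_H$ is a finite, continuous convex gauge, $x_0$ is a minimizer as soon as $J_H'(x_0;d) = \max_{g\in\partial J_H(x_0)}\dotp{g}{d} \geq 0$ for every $d \in \Ker\Phi$. I would obtain this from the source condition $(\mathbf{SC}_{x_0})$: if $\Phi^*\eta = H_I v_I \in \partial J_H(x_0)$ for some $v_I \in \SimplexI$, then $\dotp{H_I v_I}{d} = \dotp{\eta}{\Phi d} = 0$ for $d\in\Ker\Phi$, whence $J_H'(x_0;d) \geq \dotp{H_I v_I}{d} = 0$ and $J_H(x_0+d)\geq J_H(x_0)$ by convexity. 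So the whole task reduces to building $\eta$ and $v_I\in\SimplexI$ out of the hypothesis $\IC_H(I)>0$.

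To build the certificate, let $z_I^\star\in\Ker H_I$ attain the maximum defining $\IC_H(I)$ and set $p = \PhiIT^*\Gamma_I^\bot\PhiIT\UI + z_I^\star$, so that $\min_{i\in I} p_i = \IC_H(I) > 0$, i.e. $p$ is strictly positive componentwise. Because $H_I^* x_0 = J_H(x_0)\UI$ with $J_H(x_0)>0$, we have $\UI \in \mathrm{Im}\, H_I^* = (\Ker H_I)^\perp$, so $\dotp{z_I^\star}{\UI}=0$ and $\dotp{p}{\UI} = \norm{\PhiIT\UI}_{\Gamma_I^\bot}^2$, which is positive since $p>0$. Normalizing, $v_I = p/\dotp{p}{\UI}$ satisfies $v_I>0$ and $\dotp{v_I}{\UI}=1$, i.e. $v_I\in\relint\SimplexI\subseteq\SimplexI$; by Lemma~\ref{lem:subd-h} this already gives $H_I v_I\in\partial J_H(x_0)$.

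The remaining and most delicate point is to check $H_I v_I \in \mathrm{Im}\,\Phi^*$. The key is the identity $U^*\Phi^*\Gamma_I^\bot = 0$, immediate from $\Gamma_I^\bot = \Id - \Phi U M_I U^*\Phi^*$ and $M_I = (U^*\Phi^*\Phi U)^{-1}$; it shows $\mathrm{Im}(\Phi^*\Gamma_I^\bot)\subseteq(\Ker H_I^*)^\perp = \mathrm{Im}\, H_I$, so in particular $\Phi^*\Gamma_I^\bot\PhiIT\UI\in\mathrm{Im}\, H_I$. Writing $\PhiIT^* = H_I^+\Phi^*$ gives $H_I\PhiIT^*\Gamma_I^\bot\PhiIT\UI = H_I H_I^+(\Phi^*\Gamma_I^\bot\PhiIT\UI)$, and since $H_I H_I^+$ is the orthogonal projector onto $\mathrm{Im}\, H_I$ it fixes this vector; together with $H_I z_I^\star = 0$ this yields $H_I p = \Phi^*\Gamma_I^\bot\PhiIT\UI$. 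Hence $\eta = \Gamma_I^\bot\PhiIT\UI/\dotp{p}{\UI}$ obeys $\Phi^*\eta = H_I v_I \in \partial J_H(x_0)$, so $(\mathbf{SC}_{x_0})$ holds and, by the first paragraph, $x_0$ solves~\eqref{eq:bp-h}. The main obstacle is exactly this range computation: one must see that the certificate assembled from the $\IC_H(I)$-maximizer lands in $\mathrm{Im}\,\Phi^*$ automatically, thanks to the orthogonality encoded in $\Gamma_I^\bot$, so that no correcting term in $\Ker\Phi$ is needed. As a bonus, $v_I\in\relint\SimplexI$ combined with Lemma~\ref{lem:uniqueness-h} would even give uniqueness, though only optimality is claimed here.
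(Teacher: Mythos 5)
Your proof is correct, but it takes a genuinely different route from the paper's own proof of this lemma. The paper deduces the lemma from Lemma~\ref{lem:small-without} (i.e.\ from the full Theorem~\ref{thm:small-noise} with $w=0$): for every $0<\lambda<T\tilde c_I$ the regularized problem \eqref{eq:lasso-h} has the explicit unique solution $\xsol_\lambda = x_0 + O(\lambda)$, and comparing objective values against any feasible $\tilde x$ and letting $\lambda\to 0^+$ gives $J_H(x_0)\le J_H(\tilde x)$ by continuity of $J_H$. You instead build the dual certificate directly: taking $z_I^\star$ the maximizer defining $\IC_H(I)$, normalizing $p=\PhiIT^*\Gamma_I^\bot\PhiIT\UI+z_I^\star$ by $\dotp{p}{\UI}=\norm{\PhiIT\UI}_{\Gamma_I^\bot}^2$, and verifying via $U^*\Phi^*\Gamma_I^\bot=0$ and the projector $H_IH_I^+$ that $H_Iv_I=\Phi^*\eta$ with $\eta=\Gamma_I^\bot\PhiIT\UI/\dotp{p}{\UI}$. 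This is essentially the certificate the paper itself constructs later, in the proof of Theorem~\ref{thm:noiseless}, to obtain uniqueness; your range computation is in fact more careful than the paper's (which writes $H_IH_I^{+,*}$ and omits the normalization of $\eta$). The trade-off: the paper's limiting argument is one line once Theorem~\ref{thm:small-noise} is available, but it drags in the constants $T$ and $\tilde c_I$ and the whole restricted-problem machinery; your argument is self-contained, needs only Lemma~\ref{lem:subd-h} and elementary linear algebra, silently establishes $\norm{\PhiIT\UI}_{\Gamma_I^\bot}^2>0$ as a by-product of $p>0$, and, combined with Lemma~\ref{lem:uniqueness-h}, proves optimality and uniqueness in one stroke, thereby subsuming Theorem~\ref{thm:noiseless} as well. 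The only point you pass over quickly is the attainment of the maximum defining $\IC_H(I)$; this holds because $\UI\perp\Ker H_I$ forces the components of any $z_I\in\Ker H_I$ to sum to zero, so the concave piecewise-linear objective is bounded above on $\Ker H_I$ -- the paper takes the same attainment for granted.
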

\begin{proof}
  According to Lemma~\ref{lem:small-without}, for every $0 < \lambda <T \tilde c_I$,
  \begin{equation*}
    \xsol_\lambda = x_0 + \frac{\lambda}{\norm{\PhiIT \UI}_{\Gamma_I^\bot}^2} [UM_IU^*\Phi^*\Phi - \Id]H_I^{+,*} \UI ,
  \end{equation*}
  is the unique solution of~\eqref{eq:lasso-h}.

  Let $\tilde x \neq x_0$ such that $\Phi \tilde x = y$.
  For every $0 < \lambda <T \tilde c_I$, since $\xsol_\lambda$ is the unique minimizer of~\eqref{eq:lasso-h}, one has
  \begin{equation*}
    \frac{1}{2} \norm{y - \Phi \xsol_\lambda}_2^2 + J_H(\xsol_\lambda)
    <
    \frac{1}{2} \norm{y - \Phi \tilde x}_2^2 + J_H(\tilde x) .
  \end{equation*}
  Using the fact that $\Phi \tilde x = y = \Phi x_0$, one has $J_H(\xsol_\lambda) < J_H(\tilde x)$.
  By continuity of the mapping $x \mapsto J_H(x)$, taking the limit for $\lambda \to 0$ in the previous inequality gives 
  \begin{equation*}
    J_H(x_0) \leq J_H(\tilde x).
  \end{equation*}
  It follows that $x_0$ is a solution of~\eqref{eq:bp-h}.
\end{proof}

We now prove Theorem~\ref{thm:noiseless}.
\begin{proof}[Proof of Theorem~\ref{thm:noiseless}]
  Lemma~\ref{lem:noiseless-x0} proves that $x_0$ is a solution of~\eqref{eq:bp-h}.
  We now prove that $x_0$ is in fact the unique solution.
  Let $\tilde z_I$ be the argument of the maximum in the definition of $\IC_H(I)$.
  We define
  \begin{equation*}
    \tilde v_I = \dfrac{1}{\norm{\PhiIT \UI}_{\Gamma_I^\bot}^2} 
                 \left( \tilde z_I + \PhiIT^* \Gamma_I^\bot \PhiIT \UI \right).
  \end{equation*}
  By definition of $\IC_H(I)$, for every $i \in I, \tilde v_I > 0$ and $\dotp{\tilde v_I}{\UI} = 1$.
  Thus, $H_I \tilde v_I \in \relint (\partial J_H(x_0))$.
  Moreover, since $\tilde z_I \in \Ker H_I$, one has
  \begin{equation*}
    H_I v_I = H_I H_I^{+,*} \Phi^* \Gamma_I^\bot \PhiIT \UI = \Phi^* \eta \qwhereq \eta = \Gamma_I^\bot \PhiIT \UI .
  \end{equation*}
  Thanks to Lemma~\ref{lem:uniqueness-h}, $x_0$ is the unique solution of~\eqref{eq:bp-h}.
\end{proof}


\bibliographystyle{IEEEtran}
\bibliography{IEEEabrv,vaiter_polyhedral_robustness}
\end{document}